\newcommand{\cD}{\mathcal{D}}
\newcommand{\cF}{\mathcal{F}}
\newcommand{\bbC}{\mathbb{C}}
\newcommand{\bbQ}{\mathbb{Q}}
\def\dres{\partial{\rm Res}}
\def\0{{\bf 0}}
\def\dres{\partial{\rm Res}}
\def\dsres{\partial{\rm sres}}
\newtheorem{thm}{Theorem}[section]
\newtheorem{lem}[thm]{Lemma}
\theoremstyle{definition}
\newtheorem{defi}[thm]{Definition}
\newtheorem{example}[thm]{Example}
\theoremstyle{remark}
\date{} %
\title{Computing differential subresultants with Maple}
\author{
Marcos Cabellos\\
Universidad Polit\' ecnica de Madrid.\\
\tt{marcos.cabellos@alumnos.upm.es}
\and
Sonia L. Rueda\\
Dpto. de Matem\' atica Aplicada, E.T.S. Arquitectura\\
Universidad Polit\' ecnica de Madrid.\\
Avda. Juan de Herrera 4, 28040-Madrid, Spain.\\
\tt{sonialuisa.rueda@upm.es}
}
\begin{document}

\maketitle

\begin{abstract}
    We review the definition and main properties of differential subresultants in order to achieve their implementation in Maple, using the DEtools package. The focus is on computing GCRDs of ordinary differential operators with non necessarily rational coefficients. Determinant expressions provide explicit control, enabling the treatment of coefficients with parameters. Applications to commuting ordinary differential operators illustrate the effectiveness of the method.
\end{abstract}

\textsc{\href{https://zbmath.org/classification/}{MSC[2020]}: 13P15, 12H05.}

\textit{Keywords: ordinary differential operators, resultants, subresultants, greatest common
right divisors.}


\section{Introduction}

In the symbolic analysis of differential equations, the computation of greatest common right divisors (GCRDs) of differential operators plays a central role in solving operator equations, simplifying systems, and analyzing structural properties of differential ideals. Classical subresultant theory provides an efficient and well-understood algebraic framework for computing polynomial greatest common divisors  \cite{Lalo91} . It has been  extended to the non-commutative setting in the case of ordinary differential operators \cite{Cha} and more generally Ore polynomials \cite{Li}.  

This article provides an overview of the construction of differential subresultants and their main properties, focusing on their implementation in  Maple. The main goal is to make these tools available to the symbolic analysis community, allowing the computation of GCRDs of ordinary differential operators whose coefficients are not necessarily rational functions. Having control over the computation process of  coefficients, as determinants of submatrices of the Sylvester style matrices, enhances the possibilities of including parameters and further extracting conclusions. This was the case in recent works framed in the theory of commuting differential operators, starting with \cite{MRZ1} and summarized in \cite{Ru25}.

We present an implementation of differential subresultants, using the Maple computer algebra system, in the form of a library DSres. Specifically, it leverages the capabilities of the DEtools package, which provides built-in support for differential operators and related symbolic manipulations. We will review definition and main properties of subresultants in Section \ref{Sec2}, followed by an exploration of the package DSres in Section \ref{SecMaple}. We finish with two examples of application in the theory of commuting differential operators in Section \ref{SecEx}.


\noindent {\bf Preliminaries.} For concepts in differential algebra, we refer to
\cite{Ritt, Kolchin, VPS}.
A {\it differential ring} is a ring $R$ with a derivation $\partial$ on $R$. A differential ideal $I$ is an ideal of $R$ invariant under the derivation. 
Assuming that $R$ is a differential domain, its field of fractions $Fr (R)$ is a differential field with extended derivation
$$\partial (f/g) = ( \partial (f)g-f\partial (g))/g^2.$$
A {\it differential field} $(K, \partial )$ is a differential ring which is a field. Given $a\in K$, we denote $\partial(a)$ by $a'$. 

\section{Resultants and subresultants}\label{Sec2}

The Sylvester style formulas for resultants and subresultants were introduced by M. Chardin in \cite{Cha}, and studied for Ore polynomials by Z. Li in \cite{Li}. In this section we review these results with the main goal of preparing for their Maple implementation.


Let us consider ordinary differential operators (ODOs) whose coefficients belong to a differential domain $\mathbb{D}$ with derivation $\partial$. When wondering about the common factors of two of these ODOs, it is convenient to consider the fraction field $\mathbb{K}$ of $\mathbb{D}$, with the natural extension of the derivation, still denoted as $\partial$. This will allow us to ensure that the ring of ODOs $\mathbb{K}[\partial]$ is a left and right Euclidean domain, and as such, a principal ideal domain and a unique factorization domain \cite{BGV}. Given $A,B \in \mathbb{K}[\partial]$, we will denote their {\it greatest common right divisor} as $\textrm{gcrd}(A,B)$. Furthermore, $A$ and $B$ will be {\it right comprime} when $\textrm{gcrd}(A,B)\in \mathbb{K}$.  


The differential resultant appears in this context with similar goals as the algebraic resultant, when instead of ODOs we contemplate polynomials in one variable. There exists here, as in the polynomial case \cite{Cox}, an equivalence between the presence of common non-trivial factors of $A,B$ and the existence of order-bounded $P,Q \in \mathbb{K}[\partial]$ such that $PA+QB=0$, see \cite{McW}.  
To construct the resultant, let us assume from now on that $n:=\textrm{order}(A)$ and $m:=\textrm{order}(B)$. This inspires the linear map
\begin{equation}
    s_0: \mathcal{V}_{n} \oplus \mathcal{V}_m\rightarrow \mathcal{V}_{n+m}
\end{equation}
\begin{displaymath}
     (P,Q)\mapsto P A+Q B
\end{displaymath}

where $\mathcal{V}_k$ stands for the $\mathbb{K}$-vector subspace of $\mathbb{K}[\partial]$ whose elements have order strictly bounded by $k$. The matrix of this linear map, in the canonical basis $\mathcal{B}_k = \{1,\partial,...\partial ^{k-1}\}$ for $\mathcal{V}_k$, is an $(n+m)\times (n+m)$ matrix that will be refered to as the {\it (differential) Sylvester matrix} of $A$ and $B$ or $\textrm{Syl}(A,B)$, and may be expressed row by row by the coefficients of the following ODOs in the basis $\mathcal{B}_k$
\begin{equation}\label{def_sylv}
    \textrm{Syl}(A,B) =\left( \begin{array}{c|c|c|c|c|c|c|c}
         [\partial^{m-1} A]_{\mathcal{B}_{m+n}}& 
         [\partial^{m-2} A]_{\mathcal{B}_{m+n}}&
         \dots &
         [A]_{\mathcal{B}_{m+n}}&
         [\partial^{n-1} B]_{\mathcal{B}_{m+n}}&
         ... &
         [B]_{\mathcal{B}_{m+n}}
    \end{array} \right)^T.
\end{equation}

\begin{defi}
    Given $A,B\in \mathbb{D}[\partial]$, the {\sf differential resultant} of $A,B$, denoted $\dres(A,B)$, is defined to be \begin{displaymath}
            \dres(A,B) := \textrm{det}(\textrm{Syl}(A,B)).
    \end{displaymath}
\end{defi}

To properly utilize this new object, it is important to understand where does it map the pair $(A,B)$. The proof of the next theorem can also be found in \cite{Cha, Li, McW}.

\begin{lem}\label{lema}
    Given $A,B\in \mathbb{D}[\partial]$, their differential resultant stays in the image of $s_0$. That is,

    \begin{displaymath}
        \dres(A,B) = PA+QB \in \textrm{Im}(s_0) \cap \mathbb{D} 
    \end{displaymath}
    for some pair $P,Q\in \mathbb{D}[\partial]$ such that order$(P) = m-1$ and order$(Q)=n-1$.
\end{lem}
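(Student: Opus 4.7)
The plan is to exploit the adjugate identity $\mathrm{adj}(S)\cdot S=\det(S)\,I_{n+m}$ for $S:=\mathrm{Syl}(A,B)$, applied to the ``canonical'' operator column $\mathbf{v}:=(1,\partial,\ldots,\partial^{n+m-1})^T$. The key observation, built into the definition of the Sylvester matrix, is that, for each row $i$, the operator $R_i:=\sum_{j=1}^{n+m} s_{ij}\partial^{j-1}$ (where $s_{ij}\in\mathbb{D}$ are the entries of $S$) is exactly $\partial^{m-i}A$ for $1\le i\le m$ and $\partial^{n+m-i}B$ for $m<i\le n+m$. Equivalently, $S\mathbf{v}=(R_1,\ldots,R_{n+m})^T$, a vector with entries in $\mathbb{D}[\partial]$.

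I would then hit both sides by $\mathrm{adj}(S)$ on the left and invoke associativity of the mixed product---valid because $\mathrm{adj}(S)$ has entries in $\mathbb{D}$ and scalars in $\mathbb{D}$ commute with one another---to obtain $\mathrm{adj}(S)\cdot(S\mathbf{v})=(\mathrm{adj}(S)\cdot S)\mathbf{v}=\det(S)\,\mathbf{v}$. Reading the first entry yields
$$\det(S)\;=\;\sum_{i=1}^{n+m}(-1)^{i+1}M_{i,1}\,R_i,$$
where $M_{i,1}\in\mathbb{D}$ is the $(i,1)$-minor of $S$ and each scalar cofactor sits to the \emph{left} of the operator $R_i$. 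Splitting this sum according to whether $R_i$ comes from the $A$-block or the $B$-block gives the desired decomposition $\det(S)=PA+QB$ with
$$P\;:=\;\sum_{i=1}^{m}(-1)^{i+1}M_{i,1}\partial^{m-i},\qquad Q\;:=\;\sum_{i=m+1}^{n+m}(-1)^{i+1}M_{i,1}\partial^{n+m-i}.$$
The $i=1$ summand of $P$ contributes $\partial^{m-1}$ and the $i=m+1$ summand of $Q$ contributes $\partial^{n-1}$, so $\mathrm{ord}(P)\le m-1$ and $\mathrm{ord}(Q)\le n-1$, placing $(P,Q)$ in the domain of $s_0$. Since the cofactors $M_{i,1}$ are polynomial in the entries of $S$, they lie in $\mathbb{D}$; hence $P,Q\in\mathbb{D}[\partial]$ and $\det(S)\in\mathbb{D}$, which is exactly $\dres(A,B)\in\mathrm{Im}(s_0)\cap\mathbb{D}$.

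The only delicate point---more a bookkeeping obstacle than a genuine difficulty---is respecting the non-commutativity of $\mathbb{D}[\partial]$: in every product above, the scalar factor (an entry of $\mathrm{adj}(S)$ or of $S$) must be written to the left of the operator factor (a power of $\partial$, or $A$, or $B$), which is what makes the mixed-product associativity hold and what matches the left-$\mathbb{D}$-module structure underlying $s_0$. Once this convention is fixed, the whole argument reduces to the standard adjugate identity over the commutative ring $\mathbb{D}$, and the order bounds on $P$ and $Q$ drop out of the grouping itself.
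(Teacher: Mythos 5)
Your proof is correct and is essentially the paper's argument in different packaging: the paper performs the column operation $C_{\mathrm{last}}\mapsto\sum_i C_i\,\partial^{\,n+m-i}$ on $\mathrm{Syl}(A,B)$ and Laplace-expands along that column, which is exactly your adjugate identity $\mathrm{adj}(S)S=\det(S)I$ applied to the vector of monomials and read off at the constant-coefficient entry, yielding the same signed-minor expressions for $P$ and $Q$ (up to the reversed column ordering you chose, which only affects an overall sign absorbable into $P,Q$). One small remark, applying equally to the paper's own proof: the argument really establishes $\mathrm{ord}(P)\le m-1$ and $\mathrm{ord}(Q)\le n-1$ (the leading minors could vanish), i.e.\ $(P,Q)\in\mathcal{V}_m\oplus\mathcal{V}_n$, which is all that is needed in the sequel.
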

\begin{proof}
    We can make some elemental transformations to the determinant that defines the resultant to obtain said $P,Q$. We will add to the last column of $\textrm{Syl}(A,B)$ every other column $C_i$ for $i=1,...,n+m-1$ multiplied by a factor $\partial ^{n+m-i}$. The resulting matrix by columns would be
    \begin{equation}\label{AddCol}
        \left(\begin{array}{c|c|c|c|c}
            C_1 & C_2 & ... & C_{n+m-1} & \sum_{i=1}^{n+m} C_i \partial^{n+m-i}  
        \end{array}\right)
    \end{equation}
    and this new last column is, because of the structure noted in \eqref{def_sylv},
    \begin{equation}
        \sum_{i=1}^{n+m} C_i \partial^{n+m-i} = \left(\begin{array}{c|c|c|c|c|c|c|c}
         \partial^{m-1} A  &
         \partial^{m-2} A&
         ...&
         A&
         \partial^{n-1}B&
         \partial^{n-2}B&
         ... &
         B
        \end{array}\right) ^T.
    \end{equation}
    Therefore, developing the new determinant by the last column and extracting the right factors $A$ and $B$ we will obtain
    \begin{equation}
        \dres(A,B) = (\sum_{i=1}^m M_i\partial^{m-i} )A + (\sum_{i=1}^n M_{i+m}\partial^{n-i})B 
    \end{equation}
    where $M_k$ is the $k$-th minor in the development, and thus $P:=\sum_{i=1}^m M_i\partial^{m-i}, Q:=\sum_{j=1}^n M_{m+j}\partial^{n-j}$ are the ODOs of the order we needed. Note that $M_k\in \mathbb{D}$ since $\textrm{Syl}(A,B)$ was constructed through the coefficients of ODOs in $\mathbb{D}[\partial]$. 
\end{proof}

Through this lemma we can finally link the resultant to the non-triviality of $\textrm{gcrd}(A,B)$.

\begin{thm}\label{teorema_resultante}
    Given $A,B\in \mathbb{D}[\partial]$, 
    \begin{equation}
        \dres(A,B) \neq 0 \Longleftrightarrow A,B\textrm{ are right coprime in } \mathbb{K}[\partial].
    \end{equation}
\end{thm}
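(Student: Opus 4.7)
The plan is to prove both implications separately, using Lemma \ref{lema} for the forward direction and an order-counting argument in the Euclidean domain $\mathbb{K}[\partial]$ for the reverse one.

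For the forward implication, I would argue by contrapositive. Suppose $D := \gcrd(A,B)$ has positive order, so that $A = A_1 D$ and $B = B_1 D$ for some $A_1, B_1 \in \mathbb{K}[\partial]$. By Lemma \ref{lema}, we can write $\dres(A,B) = PA + QB = (PA_1 + QB_1)\,D$. Since $\mathbb{K}[\partial]$ is a domain and $\ord(D) > 0$, the right-hand side is either zero or has positive order as a differential operator. But $\dres(A,B) \in \mathbb{D}$ has order zero as an operator, so it must vanish.

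For the reverse implication, I would again work by contrapositive: assume $\dres(A,B) = \det(\textrm{Syl}(A,B)) = 0$. Then the linear map $s_0$ has non-trivial kernel, yielding operators $P, Q \in \mathbb{K}[\partial]$ with $\ord(P) < m$, $\ord(Q) < n$, not both zero, and $PA + QB = 0$. Since $\mathbb{K}[\partial]$ is a domain and $A, B$ are non-zero, if one of $P, Q$ were zero the other would be forced to vanish as well; hence both are non-zero and $PA = -QB$ is a non-zero element of the intersection of left ideals $\mathbb{K}[\partial] A \cap \mathbb{K}[\partial] B$.

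The main obstacle is then a single step relying on a standard fact from the theory of Ore extensions: the intersection $\mathbb{K}[\partial] A \cap \mathbb{K}[\partial] B$ is a principal left ideal whose generator, the least common left multiple of $A$ and $B$, has order $n + m - \ord(\gcrd(A,B))$. If $A, B$ were right coprime, every non-zero element of this intersection would therefore have order at least $n+m$. But $\ord(PA) \leq (m-1) + n = n + m - 1$, giving the required contradiction. The invocation of this lclm order formula (available in \cite{BGV}) is the technical crux; everything else is linear algebra combined with Lemma \ref{lema}.
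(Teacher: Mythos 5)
Your proposal is correct, and while the forward implication is essentially the paper's argument in contrapositive form (write $\dres(A,B)=PA+QB=(P\tilde A+Q\tilde B)\,\mathrm{gcrd}(A,B)$ via Lemma \ref{lema} and compare orders), your reverse implication takes a genuinely different route. The paper argues directly: coprimality gives, via the extended Euclidean algorithm, a B\'ezout identity $PA+QB=\mathrm{gcrd}(A,B)\in\mathbb{K}$ with the required order bounds on $P,Q$, so $1\in\mathrm{Im}(s_0)$ and $s_0$ is onto, forcing $\det(\mathrm{Syl}(A,B))\neq 0$. You instead argue the contrapositive through the \emph{kernel}: $\det=0$ yields a nontrivial $(P,Q)$ with $PA+QB=0$, hence a nonzero common left multiple $PA=-QB$ of order at most $n+m-1$, which contradicts coprimality by Ore's formula $\ord(\mathrm{lclm}(A,B))+\ord(\mathrm{gcrd}(A,B))=n+m$. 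Your version has the advantage of needing only injectivity of a square matrix (so you sidestep both the order bounds on the B\'ezout cofactors and the small gap in the paper's step from ``$1\in\mathrm{Im}(s_0)$'' to ``$s_0$ is surjective''), at the cost of importing the lclm--gcrd order formula as an external black box; the paper's version stays self-contained modulo the extended Euclidean algorithm. Both are sound; just make sure the lclm order formula is actually available in the reference you cite, or state it as a separate lemma, since it is the load-bearing step of your argument.
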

\begin{proof}

    Given $A,B$ such that their resultant is non-zero, it is evident from Lemma \ref{lema} that  $PA+QB=d\in\mathbb{D}$ for some adequate $P,Q$. Considering $F:=\textrm{gcrd}(A,B)$ and $A=\Tilde{A}F,B=\Tilde{B}F$ then
    \begin{displaymath}
        d=PA+QB=(P\Tilde{A}+Q\Tilde{B})F.
    \end{displaymath} Because $\mathbb{D}$ is a domain and order$(d)=0$, $F$ must be of order zero as well.
    
    The remaining implication holds since $\mathbb{K}[\partial]$ is right Euclidean. It suffices to apply Euclid's extended algorithm (or directly summoning Bézout's identity) to get $PA+QB=\textrm{gcrd}(A,B)\in \mathbb{K}$ for some $(P,Q)\in \mathcal{V}_{n} \oplus \mathcal{V}_m$, and thus $s_0(P\textrm{gcrd}(A,B)^{-1},Q\textrm{gcrd}(A,B)^{-1})=1$, making $s_0$ surjective and its determinant, the resultant, non-zero.
\end{proof}
    
    

Summing up, the differential resultant will indicate if two ODOs are right coprime or if they have some non-trivial common factor. To obtain such factor explicitly we may contemplate  subresultants as an useful tool.

The differential resultant can be thought of as the $0$-th differential subresultant. Let us construct explicitly the first subresultant. To do so, we will define $s_1$ as

\begin{equation}
    s_1: \mathcal{V}_{n-1} \oplus \mathcal{V}_{m-1}\rightarrow \mathcal{V}_{n+m-1}
\end{equation}
\begin{displaymath}
     (P,Q)\mapsto P A+Q B.
\end{displaymath}

Put into perspective, we are bounding the orders of the ODOs of the linear combination. This new linear map presents a similarly structured matrix $\mathcal{M}_1$ in canonical basis, but now rectangular of size $(n+m-2)\times(n+m-1)$
\begin{equation*}
    \mathcal{M}_1 =\left( \begin{array}{c|c|c|c|c|c|c|c}
         [\partial^{m-2} A]_{\mathcal{B}_{m+n-1}}& 
         [\partial^{m-3} A]_{\mathcal{B}_{m+n-1}}&
         \dots &
         [A]_{\mathcal{B}_{m+n-1}}&
         [\partial^{m-2} B]_{\mathcal{B}_{m+n-1}}&
         ... &
         [B]_{\mathcal{B}_{m+n-1}}
    \end{array} \right)^T
\end{equation*}
which is equivalent to eliminating the topmost and $(m+1)$-th rows together with the leftmost column from $\textrm{Syl}(A,B)$. To obtain the first subresultant we must compute $\mathcal{M}_1$'s determinant polynomial, as defined in \cite{Li}. Given a $r\times c$ matrix $\mathcal{M}$ with $r\leq c$, its polynomial determinant pdet($\mathcal{M}$) is obtained as
\begin{equation}
    \textrm{pdet}(\mathcal{M}) = \sum_{i=0}^{c-r}\textrm{det}(\mu_i)\partial^i
\end{equation}
where $\mu_i =(C_1|...|C_{r-1}|C_{c-i})$, when $C_j$ is the $j$-th column of $\mathcal{M}$. Notice that, inserting $\partial^i$ in the rightmost column, we can group it into a single determinant. We can also add each column to the last one as we did in \eqref{AddCol} resulting in
\begin{equation}\label{onedet}
    \textrm{pdet}(\mathcal{M}) = \textrm{det}(C_1|...|C_{r-1}|\sum_{i=0}^{c-r}C_{c-i}\partial ^i) = \textrm{det}(C_1|...|C_{r-1}|\sum_{i=0}^{c-1}C_{c-i}\partial^i).
\end{equation}

This means that the first differential subresultant of $A$ and $B$, $\partial\textrm{sres}_1(A,B)$ is computed as
\begin{equation}
    \partial\textrm{sres}_1(A,B) = \textrm{pdet}(\mathcal{M}_1).
\end{equation}

Now we can easily extend this definition to the $i$-th subresultant with $i=0,...,m-1$. We construct the linear map $s_i$ as
\begin{equation}
    s_i: \mathcal{V}_{n-i} \oplus \mathcal{V}_{m-i}\rightarrow \mathcal{V}_{n+m-i}
\end{equation}
\begin{displaymath}
     (P,Q)\mapsto P A+Q B.
\end{displaymath}
whose matrix in the canonical basis can be explicitly described as
\begin{equation*}\label{explicit}
    \mathcal{M}_i =\left( \begin{array}{c|c|c|c|c|c|c|c}
         [\partial^{m-i-1} A]_{\mathcal{B}_{m+n-i}}& 
         [\partial^{m-i-2} A]_{\mathcal{B}_{m+n-i}}&
         \dots &
         [A]_{\mathcal{B}_{m+n-i}}&
         [\partial^{n-i-1} B]_{\mathcal{B}_{m+n-i}}&
         ... &
         [B]_{\mathcal{B}_{m+n-i}}
    \end{array} \right)^T
\end{equation*}
or, starting from $\textrm{Syl}(A,B)$, erase the first $i$ rows, the $i$ rows following the $(m+1)$-th, and the leftmost $i$ columns.

\begin{defi}
    Given $A,B\in \mathbb{D}[\partial]$, their $i$-th {\sf differential subresultant}, denoted $\dsres_i(A,B)$, is defined to be \begin{displaymath}
        \dsres_i(A,B):=\textrm{pdet}(\mathcal{M}_i).
    \end{displaymath}    
\end{defi}

Now we should prove that these tools do indeed offer a way to compute $\textrm{gcrd}(A,B)$, by the so called subresutant algorithm, proved in \cite{Cha, Li}, and included here for completion.

\begin{thm}\label{thm-subres}
    Given $A,B\in \mathbb{D}[\partial]$. The following are equivalent:
    \begin{enumerate}
        \item $d:=\textrm{order}(\textrm{gcrd}(A,B))$.

        \item  $\dsres_i(A,B)=0$ for $0\leq i < d$  and $\dsres_d(A,B)\neq 0$.
    \end{enumerate}
    In this situation, the monic $\textrm{gcrd}(A,B)$ of $A$ and $B$ in $\mathit{K}[\partial]$ equals $\dsres_d(A,B)$ divided by its leading coefficient.
\end{thm}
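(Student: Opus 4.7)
The plan is to prove $(1)\Rightarrow(2)$ directly and then derive $(2)\Rightarrow(1)$ formally. I would set $G=\gcrd(A,B)$, write $A=\tilde A G$ and $B=\tilde B G$ with $\tilde A,\tilde B$ right coprime, and let $r=n+m-2d$ denote the number of rows of $\mathcal{M}_d$.

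The first step would be to extend the column manipulation from Lemma \ref{lema} to the rectangular $\mathcal{M}_i$: collapsing $\textrm{pdet}(\mathcal{M}_i)$ into a single determinant via \eqref{onedet} and Laplace-expanding along its last column will give $\dsres_i(A,B)=P_iA+Q_iB$ with $\ord(P_i)\le m-i-1$ and $\ord(Q_i)\le n-i-1$, while the definition of $\textrm{pdet}$ forces $\ord(\dsres_i(A,B))\le i$. The easy half of $(1)\Rightarrow(2)$ then follows at once: for $i<d$, every element of $\textrm{Im}(s_i)$ is right-divisible by $G$, but the order bound $i<d=\ord(G)$ leaves no room, so $\dsres_i(A,B)=0$.

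The delicate point is that $\dsres_d(A,B)\neq 0$, and I would reach it in two stages. First I would identify $\textrm{Im}(s_d)\cap\cV_{d+1}$: applying Theorem \ref{teorema_resultante} to the right coprime pair $\tilde A,\tilde B$, the corresponding $s_0$-map is bijective onto $\cV_{n+m-2d}$, yielding $\textrm{Im}(s_d)=\{HG:H\in\cV_{n+m-2d}\}$; imposing order $\le d$ forces $H\in\bbK$, so $\textrm{Im}(s_d)\cap\cV_{d+1}=\bbK\cdot G$ is one-dimensional. A parallel argument via the least common left multiple of $\tilde A,\tilde B$, whose order is $n+m-2d$, shows the degree bounds cannot accommodate any nonzero element of $\ker(s_d)$, so $s_d$ is injective.

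The main obstacle is then to tie the determinantal formula for $\dsres_d$ to this one-dimensional intersection. Let $\mathcal{N}$ denote the $r\times(r-1)$ submatrix of the first $r-1$ columns of $\mathcal{M}_d$. Its left null space equals $s_d^{-1}(\cV_{d+1})$, hence has dimension one, so $\mathcal{N}$ has full column rank and at least one of its $(r-1)\times(r-1)$ minors $M_k$ is nonzero. Laplace expansion of each $\det(\mu_j)$ along the last column then rewrites
\[
\dsres_d(A,B)\;=\;\sum_{k=1}^{r}(-1)^{k+r}M_k\,L_k,
\]
where $L_k$ runs through $\partial^{m-d-1}A,\dots,A,\partial^{n-d-1}B,\dots,B$. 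The coefficient vector $\alpha_k=(-1)^{k+r}M_k$ is nonzero and annihilates $\mathcal{N}$, so $\dsres_d(A,B)$ is a nonzero element of $\bbK\cdot G$; dividing by its leading coefficient recovers the monic $\gcrd(A,B)$. The implication $(2)\Rightarrow(1)$ is then automatic: applying the forward direction to $d':=\ord(\gcrd(A,B))$ gives $\dsres_{d'}\neq 0$ and $\dsres_i=0$ for $i<d'$, so the hypothesis of $(2)$ pins $d=d'$.
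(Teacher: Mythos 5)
Your proof is correct, and in the crucial step it takes a cleaner route than the paper. Both arguments agree on the easy half (the column manipulation of Lemma \ref{lema} gives $\dsres_i(A,B)\in \textrm{Im}(s_i)\cap \cV_{i+1}$, and right-divisibility by the gcrd kills $\dsres_i$ for $i<d$), and both ultimately rest on applying Theorem \ref{teorema_resultante} to the right coprime cofactors $\tilde A,\tilde B$. They diverge on why $\dsres_d(A,B)\neq 0$. The paper argues by contradiction: if $\dsres_d=0$ then all the minors $\det(\mu_j)$ vanish, and it claims one can ``superpose the linear relations implied by the determinants'' to produce a nonzero $(P,Q)$ with $s_d(P,Q)=0$; this step tacitly assumes the first $r-1$ columns of $\cM_d$ have rank $r-1$ (otherwise the vanishing of those particular minors does not by itself yield a common left annihilator of all columns), and that rank assertion is exactly what is left unjustified. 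You instead prove directly that $s_d$ is injective and that $\textrm{Im}(s_d)\cap\cV_{d+1}=\bbK\cdot\gcrd(A,B)$ is one-dimensional (via bijectivity of the Sylvester map of $\tilde A,\tilde B$), deduce from this that the submatrix $\cN$ of the first $r-1$ columns has a one-dimensional left null space and hence a nonzero maximal minor, and then recognize the cofactor vector of the Laplace expansion of $\dsres_d$ as a nonzero element of that null space. This buys two things: it closes the rank gap in the paper's contradiction argument, and it delivers the final claim --- that $\dsres_d(A,B)$ is a nonzero $\bbK$-multiple of $\gcrd(A,B)$ --- as an immediate consequence rather than as a separate observation from \eqref{eq-dsresi} and \eqref{eq-PAQB}. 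Two cosmetic points: the injectivity of $s_d$ already follows from the bijectivity of the Sylvester map of $\tilde A,\tilde B$ (since $\bbK[\partial]$ is a domain), so the separate least-common-left-multiple argument is redundant; and in the identity $\dsres_d(A,B)=\sum_k(-1)^{k+r}M_kL_k$ you should say explicitly, as you implicitly use, that adding the columns $C_j\partial^{\,c-j}$ for $j\le r-1$ to the last column of each $\mu_i$ leaves the determinant unchanged because those columns already appear in $\cN$.
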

\begin{proof}
    It is enough to prove that (1) implies (2). We may proceed in a similar way as we did for the differential resultant. First, we must show for any subresultant that 
    \begin{displaymath}
        \dsres_i(A,B)\in\textrm{Im}(s_i) \cap \mathcal{V}_{i+1}. 
    \end{displaymath} 
    Note that the difference with the resultant is that now the result is an ODO of bounded order. However the argument is almost the same as before, since we can rearrange the polynomial determinant. Looking at $\mathcal{M}_i$, we can see that when applying pdet as in \eqref{onedet}, the last column is simply
    \begin{equation}
        \sum_{j=0}^{n+m-i}C_{n+m-i-j}\partial^j = (\partial^{m-1-i}A |...| A |\partial ^{n-1-i}B|...|B)^T.
    \end{equation}
    Developing the determinant we reach
    \begin{equation}
        \dsres_i(A,B)=(\sum_{j=1}^{m-i}M_j\partial^j)A + (\sum_{j=1}^{n-i}M_{j+m-i}\partial^j)B \in \textrm{Im}(s_i) \cap \mathcal{V}_{i+1} .
    \end{equation}
    With this property we can now see that if $i < d$, then for some $(P,Q)\in \mathcal{V}_{n-i} \oplus \mathcal{V}_{m-i}$
    \begin{equation}\label{eq-dsresi}
        \dsres_i(A,B) = PA+QB.
    \end{equation}
    Since $\textrm{gcrd}(A,B)$ can divide the right hand side of the equality as a right Euclidean domain $\mathbb{K}[\partial]$, it should do so too for the subresultant. However, since order$(\partial\textrm{sres}_i(A,B))\leq i < d$, the subresultant must be zero. If however $i=d$, as per this last argument we only need to prove it is non-zero. Let us prove it by reduction to the absurd. If indeed we had a null subresultant, it would mean that all the determinants that compose it are zeroed. But if that were the case, we could construct $(P,Q)\in \mathcal{V}_{n-i} \oplus \mathcal{V}_{m-i}$ superposing all $i$ linear relations implied by the determinants such that $s_i(P,Q)=PA+QB=0$. However, we know that 
    \begin{equation}\label{eq-PAQB}
        PA+QB = (P\tilde{A}+Q\tilde{B})\textrm{gcrd}(A,B).
    \end{equation}
    If we link both equalities, 
    \begin{displaymath}
    (P\Tilde{A}+Q\Tilde{B})\textrm{gcrd}(A,B) = 0,
    \end{displaymath}
    so because we operate in a domain, either the common divisor is zero, which is not the case; or the other factor is. However, applying Theorem \ref{teorema_resultante} to it since we now have the adequate orders to do so, this is a contradiction as we have a null resultant of right coprime operators $\Tilde{A},\Tilde{B}$.

    Observe that if the statements hold then by \eqref{eq-dsresi} and \eqref{eq-PAQB} then the monic \textrm{gcrd}(A,B) is the $\dsres_d(A,B)$ divided by its leading coefficient.
\end{proof}

\section{Maple implementation}\label{SecMaple}

We present the Maple package DSres that contains the tools to compute differential resultant and differential subresultants. This section contains the description of the functions of the package. They are based on the Maple packages DEtools, for manipulation of differential operators, essentially for multiplication, and LinearAlgebra for computation of determinants. All functions have the additional argument $t$ to indicate that work is carried out in a ring of differential operators $\mathbb{D}[\partial]$ with derivation $\partial=d/dt$. 

Another Maple package that allows multiplications in $\mathbb{D[\partial]}$ is OreTools, but DEtools was chosen to define ODOs as polynomials in $Dt$. Both DEtools and OreTools allow for the computation of greatest common right divisors, however in many occasion more control over the computation process of gcrd is needed, as in the case of ODOs including parameters or coefficients that are non necessarily rational functions.



The package DSres and all the examples presented in this paper are available at

\begin{center}
    \url{https://github.com/MarcosCH04/DSres/tree/main}
\end{center}



The commands of the package DSres are:

\begin{itemize}
    \item \textbf{DSylvester} is called as
    \begin{center}
        DSylvester$(A,B,t)$
    \end{center}
    and will output $\textrm{Syl}(A,B)$. 

    \item \textbf{DResultant} is called as
    \begin{center}
        DResultant$(A,B,t)$
    \end{center}
    and will output $\dres(A,B)$.
    
    \item \textbf{DSubresultantMatrix} is called as
    \begin{center}
        DSubresultantMatrix$(A,B,ind,degA,degB,t,dim)$
    \end{center}
    and will output the matrix $\mathcal{M}_{ind}$ as described in the previous section. This process has mainly the goal to allow the subresultant-related processes to work as intended, but we have opted to allow user interaction since the matrix it offers can be of interest. The remaining arguments are
    \begin{itemize}
        \item \textbf{degA} the order of $A$ in $DA$. It \textbf{must be provided}.
        \item \textbf{degB} the order of $B$ in $DA$. It \textbf{must be provided}.
        \item \textbf{dim} will always be $degA+degB-2ind$. It \textbf{must be omitted} and is only there to avoid repeating calculations in following processes. 
    \end{itemize}
    These arguments are not computed internally because this process is used in the following ones and would calculate these same values many times.

    \item \textbf{DSubresultant} is called as
    \begin{center}
        DSubresultant$(A,B,ind,t)$
    \end{center}
    and will output a two-elements list, in order, $\dsres_{ind}(A,B)$ and a nested list with all matrices involved in the computation of the polynomial determinant that defines the subresultant.

    \item \textbf{DSubresultantSequence} is called as
    \begin{center}
        DSubresultantSequence$(A,B,t)$
    \end{center}
    and will output a list of all subresultants and the matrices used in their computation. Precisely, it has $\textrm{min}\{\textrm{order}(A),\textrm{order}(B)\}$ elements, each of them a list obtained from applying the previous procedure DSubresultant.
\end{itemize}

\begin{example}
Let us consider differential operators  $A=\sum_{i=1}^n a_i \partial^i$ and $B=\sum_{j=1}^m b_j \partial^j$ whose coefficients are differential variables. We can consider the differential domain
\begin{equation}
    \cD=\bbQ\{a_i,b_j\}=\bbQ[a_i, a_i^{[k]}:=\partial^k(a_i), b_j, b_j^{[k]}:=\partial^{k} (b_j)].
\end{equation}

One important property of differential resultant and subresultant is that their coefficients stay in $\cD$. We illustrate this fact for $n=2$ and $m=3$.

\begin{lstlisting}
> with(DSres):with(DEtools):
> # We define generic operators A and B in the differential algebra DA.
> DA:=[Dx,x]:
> A:=sum(a[i](x)*Dx^i,i=0..2):B:=sum(b[j](x)*Dx^j,j=0..3):
> # Computing differential Sylvester matrix.  
> DSylvester(A,B);
\end{lstlisting}
\textcolor{blue}{
$$\label{(1)}
    \left[\begin{array}{ccccc}
a_{2}\! \left(x \right) & 2 \frac{d}{d x}a_{2}\! \left(x \right)+a_{1}\! \left(x \right) & \frac{d^{2}}{d x^{2}}a_{2}\! \left(x \right)+2 \frac{d}{d x}a_{1}\! \left(x \right)+a_{0}\! \left(x \right) & \frac{d^{2}}{d x^{2}}a_{1}\! \left(x \right)+2 \frac{d}{d x}a_{0}\! \left(x \right) & \frac{d^{2}}{d x^{2}}a_{0}\! \left(x \right) 
\\
 0 & a_{2}\! \left(x \right) & \frac{d}{d x}a_{2}\! \left(x \right)+a_{1}\! \left(x \right) & \frac{d}{d x}a_{1}\! \left(x \right)+a_{0}\! \left(x \right) & \frac{d}{d x}a_{0}\! \left(x \right) 
\\
 0 & 0 & a_{2}\! \left(x \right) & a_{1}\! \left(x \right) & a_{0}\! \left(x \right) 
\\
 b_{3}\! \left(x \right) & \frac{d}{d x}b_{3}\! \left(x \right)+b_{2}\! \left(x \right) & \frac{d}{d x}b_{2}\! \left(x \right)+b_{1}\! \left(x \right) & \frac{d}{d x}b_{1}\! \left(x \right)+b_{0}\! \left(x \right) & \frac{d}{d x}b_{0}\! \left(x \right) 
\\
 0 & b_{3}\! \left(x \right) & b_{2}\! \left(x \right) & b_{1}\! \left(x \right) & b_{0}\! \left(x \right) 
\end{array}\right]
$$
}

\begin{lstlisting}
> # Computing the first subresultant.
> DSres1:=DSubresultant(A,B,1):
> # The first element of the output list stores the subresultant.
> DSres1[1]:
\end{lstlisting}

In the notation of the differential algebra $\cD$
\begin{align*}\dsres_1(A,B)=&\det(S_0^1)+ \det(S_1^1) \partial,\\
&\det(S_0^1)=a_{2}^{2} b_{0}-a_{2} a_{0} b_{2}-a_{2} b_{3} a_{0}'+a_{0} b_{3} a_{2}'+a_{0} b_{3} a_{1},\\
&\det(S_1^1)=b_{1} a_{2}^{2}-a_{1}' a_{2} b_{3}-a_{2} a_{0} b_{3}-a_{2} b_{2} a_{1}+b_{3} a_{2}' a_{1}+b_{3} a_{1}^{2}. 
\end{align*}
The list of matrices $[S_0^1,S_1^1]$ is stored in the second element of the output list. 

\begin{lstlisting}
> DSres1[2];
\end{lstlisting}
\textcolor{blue}{
$$\label{(2)}
    \left[
\left[\begin{array}{ccc}
a_{2}\! \left(x \right) & \frac{d}{d x}a_{2}\! \left(x \right)+a_{1}\! \left(x \right) & \frac{d}{d x}a_{0}\! \left(x \right) 
\\
 0 & a_{2}\! \left(x \right) & a_{0}\! \left(x \right) 
\\
 b_{3}\! \left(x \right) & b_{2}\! \left(x \right) & b_{0}\! \left(x \right) 
\end{array}\right]
, 
\left[\begin{array}{ccc}
a_{2}\! \left(x \right) & \frac{d}{d x}a_{2}\! \left(x \right)+a_{1}\! \left(x \right) & \frac{d}{d x}a_{1}\! \left(x \right)+a_{0}\! \left(x \right) 
\\
 0 & a_{2}\! \left(x \right) & a_{1}\! \left(x \right) 
\\
 b_{3}\! \left(x \right) & b_{2}\! \left(x \right) & b_{1}\! \left(x \right) 
\end{array}\right]
\right]
$$
}
The matrices $S_0^1$ and $S_1^1$ are submatrices of $\mathcal{M}_1$ respectively removing the columns indexed by $\partial$ and $1$. The matrix $\mathcal{M}_1$ can be accessed as 
\begin{lstlisting}
> DSres1_total:=DSubresultantMatrix(A,B,1,2,3):
\end{lstlisting}
\textcolor{blue}{
$$\label{(3)}
    \left[\begin{array}{cccc}
a_{2}\! \left(x \right) & \frac{d}{d x}a_{2}\! \left(x \right)+a_{1}\! \left(x \right) & \frac{d}{d x}a_{1}\! \left(x \right)+a_{0}\! \left(x \right) & \frac{d}{d x}a_{0}\! \left(x \right) 
\\
 0 & a_{2}\! \left(x \right) & a_{1}\! \left(x \right) & a_{0}\! \left(x \right) 
\\
 b_{3}\! \left(x \right) & b_{2}\! \left(x \right) & b_{1}\! \left(x \right) & b_{0}\! \left(x \right) 
\end{array}\right]
$$
}

\end{example}

\begin{example}
All the functions of the package DSres allow the declaration of the variable name "t" to define the generic ring of differential operators $[Dt,t]$. If omitted it is taken as $[Dx,x]$. In the next example it is declared as $[Ds,s]$. 

\begin{lstlisting}
> with(DSres):with(DEtools):
> A := Ds^2+s : B:= Ds^3+Ds:
\end{lstlisting}
\begin{lstlisting}
> # The parameter specifies a differential algebra [Ds,s].
DResultant(A,B,s);
\end{lstlisting}
\textcolor{blue}{
$$\label{(4)}
s^{3}-2 s^{2}+s +2
$$
}
\end{example}

\section{Applications in the theory of commuting differential operators}\label{SecEx}

We will demonstrate the usefulness of having explicit formulas for differential resultants and subresultants with examples from the theory of commuting differential operators, including important developments from the beginning of the twentieth century by Schur, Burchnall-Chaundy, Krichever, Mumford, Wilson, Previato to name a few from a long list \cite{Zheglov}, \cite{Ru25}.


Consider differential operators $A$ and $B$ whose coefficients belong to a differential field $K$ whose field of constants is the field of complex numbers $\bbC$. Let us assume that $A$ and $B$ commute. By the theory of commuting differential operators, the $\bbC$-algebra of commuting differential operators $\bbC[A,B]$ is isomorphic to the coordinate ring of an algebraic curve $\Gamma$ 
\begin{equation}
     \bbC[\Gamma]:=\frac{\bbC[\lambda,\mu]}{(f(\lambda,\mu))}\simeq \bbC[A,B],
\end{equation}
for algebraic parameters $\lambda$ and $\mu$, that is $\partial(\lambda)=\partial(\mu)=0$.
 The famous spectral curve $\Gamma$ is defined by the polynomial $f\in \bbC[\lambda,\mu]$ 
\[\Gamma=\{(\lambda_0,\mu_0)\in \bbC^2\mid f(\lambda_0,\mu_0)=0\}.\]
It is well known that $f$ is the Burchnall-Chaundy polynomial of $A$ and $B$, that is $f(A,B)=0$, and that $f$ equals $\sqrt{h}$, the square free part of the differential resultant
\begin{equation}
    h(\lambda,\mu)=\dres(A-\lambda,B-\mu)\in \bbC[\lambda,\mu],
\end{equation}
which is a non zero polynomial with constant coefficients.

By Theorem \ref{teorema_resultante}, as differential operators in $K(\lambda,\mu)$ then $A-\lambda$ and $B-\mu$ are right coprime. Consider now a new coefficient ring 
\begin{equation}
    K[\Gamma]=\frac{K[\lambda,\mu]}{[f]},
\end{equation}
where $[f]$ is now the ideal generated by $f$ in $K[\lambda,\mu]$, which a differential ring for an extended derivation $\tilde{\partial}$ defined by $\tilde{\partial} (g+(f))=\partial(g)+[f]$, that we denote again by $\partial$.

The greatest common right divisor of $A-\lambda$ and $B-\mu$ as differential operators with coefficients in $K(\Gamma)=\mathrm{Fr}(K[\Gamma])$ is the vector bundle $\cF(\lambda,\mu)$ of the spectral curve $\Gamma$ \cite{PW}, that we will compute next using differential subresultants. 

\subsection{Euler Operators}

Let us consider a family of differential operators in $\bbC(x)[\partial]$, with $\partial=d/dx$, called Euler operators  
$$E_n := x^{-n}\delta (\delta-m)(\delta -2m)\ldots (\delta -m(n -1)), \mbox{ for } \delta := x \partial.$$
It is proved in \cite{MP2023} that $E_n$ commutes with $E_m$ and $E_n^m=E_m^n$.  

For $A=E_4$ and $B=E_6$, let us compute their differential resultant.

\begin{lstlisting}
> with(DSres): with(DEtools):
> # We define Euler Operators in the following differential algebra.
> DA:=[Dx,x]:
\end{lstlisting}

\begin{lstlisting}
> EulerOp:=proc(N,M)
> local L,i; 
> L:=x^(-N)*x*Dx; for i from 1 to N-1 do L:=mult(L,x*Dx-i*M,DA);od;
> return L;
> end:
> 
> A:=EulerOp(4,6);B:=EulerOp(6,4);
\end{lstlisting}
\textcolor{blue}{
$$\label{(5)}
A := \mathit{Dx}^{4}-\frac{30 \mathit{Dx}^{3}}{x}+\frac{295 \mathit{Dx}^{2}}{x^{2}}-\frac{935 \mathit{Dx}}{x^{3}}$$
}
\textcolor{blue}{
$$\label{(6)}
B := \mathit{Dx}^{6}-\frac{45 \mathit{Dx}^{5}}{x}+\frac{825 \mathit{Dx}^{4}}{x^{2}}-\frac{7650 \mathit{Dx}^{3}}{x^{3}}+\frac{35595 \mathit{Dx}^{2}}{x^{4}}-\frac{65835 \mathit{Dx}}{x^{5}}
$$
}

\begin{lstlisting}
> # Computing differential resultant.
dresult:=DResultant(A-lambda,B-mu): 
simplify(dresult);
\end{lstlisting}
\textcolor{blue}{
$$\label{(7)}
\left(-\lambda^{3}+\mu^{2}\right)^{2}
$$
}

The spectral curve $\Gamma$ of the pair $A, B$ is thus defined by $f(\lambda,\mu)=\lambda^3 - \mu^2$. Consider the new coefficient field $K(\Gamma)$ with $K=\bbC(x)$, which is the fraction field of the domain
\[K[\Gamma]=\frac{K[\lambda,\mu]}{[\lambda^3 - \mu^2]}.\]

Let us compute next the greatest common right divisor of $A-\lambda$ and $B-\mu$.

\begin{lstlisting}
> # Computing differential subresultant sequence.
dsres:=DSubresultantSequence(A-lambda,B-mu): 
# Visualize first subresultant
> simplify(dsres[2]);
\end{lstlisting}
\textcolor{blue}{
$$\label{(8)}
\frac{\left(\lambda^{3}-\mu^{2}\right) \left(\lambda  x^{4}-560\right)}{x^{4}},
$$
}
\begin{lstlisting}
# Visualize leading coefficient of second subresultant
> simplify(coeff(dsres[3],Dx^2));
\end{lstlisting}
\textcolor{blue}{
$$\label{(9)}
\frac{\left(\lambda x^4 -560\right)^2}{x^{8}}
$$
}

By the Theorem \ref{thm-subres}, the second differential subresultant is the greatest common right divisor of $A-\lambda$ and $B-\mu$ over $K(\Gamma)$, since it is their first non-zero subresultant 
\[\dsres_2(A-\lambda, B-\mu)=(\lambda x^4 - 560)\left( \frac{-\mu x^2 + 20\lambda}{x^6}+\frac{-5(3\lambda x^4 - 1232)}{x^9}\partial+\frac{\lambda x^4 - 560}{x^8} \partial^2\right),\]
since the leading coefficient is nonzero in $K(\Gamma)$.


It is a future project to implement a function that computes the greatest common right divisor of two operators in a differential field with algebraic parameters, whose relation is defined by an ideal generated by a list of polynomials in the given parameters.

One could compute the greatest common right divisor with DEtools but only if a parametrization of the spectral curve is provided. In this case $(s^2,s^3)$, $\forall s\in \bbC$

\begin{lstlisting}
> GCRD(A-s^2,B-s^3,DA);
\end{lstlisting}
\textcolor{blue}{
$$\label{(10)}
\mathit{Dx}^{2}-\frac{5 \left(3 s^{2} x^{4}-1232\right) \mathit{Dx}}{x \left(s^{2} x^{4}-560\right)}-\frac{s^{2} \left(s \,x^{2}-20\right) x^{2}}{s^{2} x^{4}-560}
$$
} 
\begin{lstlisting}
> GCRD(A-lambda,B-mu,DA);
\end{lstlisting}
\textcolor{blue}{
$$\label{(11)}
1
$$
} 

\subsection{Lamé operators}

By means of the DSres package, one can compute the greatest common right divisor of differential operators whose coefficients are not rational functions. 

Let us consider the Lamé operator $A=\partial^2-2\wp (x)$, where $\wp=\wp(x;g_2,g_3)$ is the Weierstrass $\wp$-function satisfying $(\wp')^2=4 \wp^3-g_2\wp-g_3$. The differential field of coefficients is $K=\bbC\langle \wp\rangle=\bbC(\wp,\wp')$ with derivation $\partial=d/dx$.

\begin{lstlisting}
> with(DSres):with(DEtools):
> # We construct differential operators A,B involving Weirstrass P function.
> DA:=[Dx,x];
\end{lstlisting}
\begin{lstlisting}
> wp:=WeierstrassP(x, g2,g3);wpp:=diff(wp,x):
\end{lstlisting}
\textcolor{blue}{
$$\label{(12)}
\mathit{wp} := \mathrm{WeierstrassP}\! \left(x ,\mathit{g2} ,\mathit{g3} \right)
$$
}
\begin{lstlisting}
> A:=Dx^2-2*wp: B:=-Dx^3+(3/2)*2*wp*Dx+(3/4)*2*wpp:
# Computing differential resultant.
dresultant:=DResultant(A-lambda,B-mu): 
simplify(dresultant);
\end{lstlisting}
\textcolor{blue}{
$$\label{(13)}
\frac{1}{4} \mathit{g3} +\mu^{2}+\frac{1}{4} \mathit{g2} \lambda -\lambda^{3}
$$
}

In this case the defining polynomial of the spectral curve is $f(\lambda,\mu)=\frac{1}{4} \mathit{g3} +\mu^{2}+\frac{1}{4} \mathit{g2} \lambda -\lambda^{3}
$. The greatest common right divisor is then the first subresultant.

\begin{lstlisting}
# Computing the first differential subresultant.
dsres1:=DSubresultant(A-lambda,B-mu,1)[1];
\end{lstlisting}
\textcolor{blue}{
$$\label{(14)}
-\frac{\mathrm{WeierstrassPPrime}\! \left(x ,\mathit{g2} ,\mathit{g3} \right)}{2}-\mu +\left(\mathrm{WeierstrassP}\! \left(x ,\mathit{g2} ,\mathit{g3} \right)  -\lambda\right)\mathit{Dx}  
$$
}

Even if we give a global parametrization of the curve $(\wp(x),\wp'(x)/2)$, DEtools cannot perform the greatest common right divisor, since it is designed for rational function coefficients.

\begin{lstlisting}
> GCRD(A-wp,B-(1/2)*wpp,DA);
\end{lstlisting}
\textcolor{blue}{
$$\label{(15)}
1
$$
}
\begin{lstlisting}
> DFactor(A-wp,DA);
\end{lstlisting}
$\href{http://www.maplesoft.com/support/help/errors/view.aspx?path=Error,%20(in%20DEtools%2FDFactor)%20expecting%20a%20rational%20function%20in%20%5BDx,%20x%5D%20but%20got%20Dx%5E2-3*WeierstrassP(x,g2,g3)}{Error, (in DEtools/DFactor) expecting a rational function in [Dx, x] but got Dx^2-3*WeierstrassP(x,g2,g3)}$

\medskip

Observe that DEtools cannot handle this factorization, whereas a right factor of $A-\wp$ was computed by means of  differential subresultants.

\medskip

As a future project, we will develop a new Maple command, based on DSres, to compute the greatest common right divisor of two differential operators $A-\lambda$ and $B-\mu$ over $K(\Gamma)$, including as an argument the ideal of the curve $\Gamma$. 

\medskip

\noindent{\bf Acknowledgments.} Supported by the grant PID2021-124473NB-I00, ``Algorithmic Differential Algebra and Integrability" (ADAI)  from the Spanish MCIN/AEI /10.13039/501100011033 and by FEDER, UE. In this special occasion, S.L. Rueda would like to thank L. Gonzalez-Vega  for introducing her to differential subresultants twenty years ago.

\bibliographystyle{plain}

\bibliography{Bibliography.bib}

\end{document}